\documentclass[conference]{IEEEtran}
\usepackage{amsmath,amssymb,amsfonts,amsthm,bm}
\usepackage{graphicx}
\usepackage{cite}
\usepackage{booktabs}
\usepackage{xcolor}
\usepackage{hyperref}
\hypersetup{hidelinks}
\usepackage{pgfplots}
\pgfplotsset{compat=1.18}
\pgfplotsset{every axis plot/.append style={mark options={solid}}}
\usetikzlibrary{arrows.meta,patterns,decorations.pathmorphing,calc}

\newtheorem{proposition}{Proposition}
\newtheorem{lemma}{Lemma}

\title{Optical ISAC with Single-Photon Avalanche Diode (SPAD) Arrays}
\author{\IEEEauthorblockN{Vasilis K. Papanikolaou\IEEEauthorrefmark{1}, Dimitris Bozanis\IEEEauthorrefmark{2}, Ferdaous Tarhouni\IEEEauthorrefmark{1}, \\ George K. Karagiannidis\IEEEauthorrefmark{2}, Christos Masouros\IEEEauthorrefmark{3}, and Robert Schober\IEEEauthorrefmark{1}}
\IEEEauthorblockA{\IEEEauthorrefmark{1}Institute for Digital Communications, Friedrich-Alexander-Universit\"at Erlangen-N\"urnberg (FAU), Erlangen, Germany
}
\IEEEauthorblockA{\IEEEauthorrefmark{2}Department of Electrical and Computer Engineering, Aristotle University of Thessaloniki, Thessaloniki, Greece}
\IEEEauthorblockA{\IEEEauthorrefmark{3}Department of Electronic and Electrical Engineering, University College London, London, U.K.}
\IEEEauthorblockA{e-mail: vasilis.papanikolaou@fau.de}
\vspace{-1cm}
}

\begin{document}
\maketitle

\begin{abstract}
Single-photon avalanche diode (SPAD) arrays offer photon-level sensitivity for sensing in optical integrated sensing and communication (O-ISAC). However, after each detection, a SPAD is blind for a dead time, which saturates the array at high received power. Existing O-ISAC analyses assume linear receivers and do not capture this effect. In this paper, we study ranging with a SPAD array in a monostatic O-ISAC system based on direct-current-biased optical orthogonal frequency-division multiplexing (DCO-OFDM). Sampling the array once per dead time, we model the number of pixels that detect a photon in each sampling interval as a binomial random variable whose success probability accounts for the intensity variation of the echo within a dead time. Based on this model, we derive the Cram\'er-Rao bound (CRB) for range estimation. We show that the SPAD count model of optical wireless communication, which assumes a constant photon rate over a dead time, causes a range bias that depends on the received power. 
Moreover, unlike for a linear photodiode, saturation degrades the ranging accuracy beyond a sensing-optimal transmit power. The rate-constrained operating point is then the larger of this power and the power needed for the target data rate. Numerical results validate the analysis against a continuous-time photon-counting simulation and show that the proposed model is needed for ranging.
\end{abstract}

\begin{IEEEkeywords}
Integrated sensing and communication (ISAC), optical wireless communication (OWC), single-photon avalanche diode (SPAD), Cram\'er-Rao bound.
\end{IEEEkeywords}

\section{Introduction}\label{sec:intro}
Optical integrated sensing and communication (O-ISAC) recently emerged as the optical wireless counterpart of integrated sensing and communication (ISAC) in radio-frequency (RF) systems, reusing one intensity-modulated waveform for data transmission and environment sensing~\cite{wen2023fsoisac,nikolaidou2025oisac}. Direct-current-biased optical orthogonal frequency-division multiplexing (DCO-OFDM) supports both high-rate communication and delay-based sensing~\cite{nikolaidou2025oisac}. As in RF ISAC, the sensing performance of such a system is characterized by the Cram\'er-Rao bound (CRB) and the CRB-rate region~\cite{liu2018dfrc,liu2022crb,xiong2023fundamental}. 

Corner-cube reflectors (CCRs) strengthen the echo of cooperative targets~\cite{cui2024retroisac}. Diffuse targets, in contrast, re-radiate the intercepted light as a Lambertian source, so that two spreading losses compound into a $1/d^4$ round-trip loss at distance $d$~\cite{zhang2025led,komine2004vlc}. The low echo power motivates single-photon avalanche diode (SPAD) arrays as the sensing receiver~\cite{huang2023jocn}. Prior O-ISAC work, however, employs linear receivers~\cite{wen2023fsoisac,nikolaidou2025oisac,cui2024retroisac,zhang2025led,xu2025optisac}, and how the nonlinearity of a photon-counting front end affects the sensing-communication trade-off has not been analyzed.

In SPAD-based optical wireless communication (OWC), the dead time compresses and eventually paralyzes the response and makes the counts sub-Poisson~\cite{huang2023jocn,sarbazi2017jlt}. In single-photon light detection and ranging (LiDAR), pile-up~\cite{coates1968,rapp2021highflux} and the resulting range bias, similar to the range-walk error of pulsed LiDAR, are compensated by dead-time-aware estimators~\cite{rapp2019deadtime,wu2025spadcrlb}. For pulsed SPAD LiDAR, an optimal incident photon rate is also known, both from an attenuation rule~\cite{gupta2019flooded} and from a CRB under dead time~\cite{wu2025spadcrlb}. None of this transfers directly to ISAC, where the photon rate follows a data-bearing DCO-OFDM intensity rather than a pulse train. As a result, the sensing optimum is coupled with the rate through the shared transmit power. Moreover, the SPAD count models of OWC~\cite{huang2023jocn} assume a photon rate that is quasi-static over a dead time. Ranging, however, demands a sampling interval as short as the dead time, where this assumption fails and the pile-up correction must be used instead~\cite{coates1968,rapp2021highflux}.

Motivated by the above, we analyze ranging and the rate-constrained transmit power of a monostatic DCO-OFDM O-ISAC system that combines the sensing channels of CCRs and diffuse targets~\cite{cui2024retroisac,zhang2025led} with the passive-quenching SPAD-array model of~\cite{huang2023jocn}. We formulate the pile-up-corrected ranging observation and derive the joint delay-amplitude Fisher information and the range CRB. Through simulations, we show that the maximum-likelihood (ML) estimator based on this model is unbiased and, up to moderate saturation, close to the bound, whereas an estimator based on the quasi-static model exhibits a power-dependent range bias. Finally, we show that, once the echo saturates it, the ranging accuracy of the SPAD array degrades beyond a sensing-optimal transmit power, unlike for a linear photodiode (PD) receiver, which benefits from more power in both functions.

\section{System Model}\label{sec:system}

\begin{figure}[t]
\centering
\begin{tikzpicture}[
  font=\scriptsize, >={Stealth[length=3.5pt,width=3pt]}, line width=0.5pt,
  lab/.style={inner sep=1pt},
]
\def\yc{4.2}\def\yf{0.55}
\coordinate (spad) at (2.92,3.86);
\coordinate (led)  at (3.48,3.86);
\coordinate (pd)   at (6.96,0.86);
\begin{scope}
\clip (0,\yf) rectangle (8.4,\yc);
\shade[top color=yellow!55, bottom color=yellow!4, opacity=0.6] (led) -- (-1.4,\yf) -- (8.4,\yf) -- cycle;
\end{scope}
\fill[pattern=north east lines, pattern color=gray!70] (0,\yc) rectangle (8.4,\yc+0.15);
\draw (0,\yc) -- (8.4,\yc);
\fill[pattern=north east lines, pattern color=gray!70] (0,\yf-0.15) rectangle (8.4,\yf);
\draw (0,\yf) -- (8.4,\yf);
\draw[fill=gray!20, rounded corners=1pt] (2.62,3.78) rectangle (3.78,\yc);
\foreach \i in {0,1,2} \foreach \j in {0,1,2} {
  \pgfmathsetmacro{\c}{(\i==1 && \j==0) || (\i==2 && \j==2) ? "red!75" : ((\i==0 && \j==1) ? "gray!60" : "white")}
  \draw[fill=\c, line width=0.2pt] ($(spad)+(-0.13+0.09*\i,0.02+0.09*\j)$) rectangle ++(0.08,0.08);
}
\draw[line width=1.2pt] (3.2,3.8) -- (3.2,4.12);
\fill[yellow!90!orange] ($(led)+(-0.12,0)$) arc[start angle=180, end angle=360, radius=0.12] -- cycle;
\draw ($(led)+(-0.12,0)$) arc[start angle=180, end angle=360, radius=0.12] -- cycle;
\node[fill=white, inner sep=1pt, font=\scriptsize\bfseries] at (3.2,4.29) {BS};
\node[lab, anchor=east] at (2.56,3.98) {SPAD array};
\node[lab, anchor=west] at (3.84,3.98) {LED};
\draw[fill=gray!35] (2.88,\yf) rectangle (3.52,0.82);
\draw[fill=cyan!25] (3.04,0.82) -- (3.36,0.82) -- (3.2,1.0) -- cycle;
\draw[line width=0.25pt] (3.2,0.88) -- (3.04,0.82) (3.2,0.88) -- (3.36,0.82) (3.2,0.88) -- (3.2,1.0);
\node[lab, anchor=north] at (3.2,\yf-0.17) {Target (CCR or diffuse)};
\draw[fill=gray!25, rounded corners=1pt] (6.66,\yf) rectangle (7.26,0.8);
\fill[black!80] (6.82,0.8) arc[start angle=180, end angle=0, radius=0.14] -- cycle;
\node[lab, anchor=north] at (6.96,\yf-0.17) {User (PD)};
\draw[gray, dashed, line width=0.3pt] (led) -- ($(led)+(0,-1.25)$);
\draw[gray, dashed, line width=0.3pt] (pd) -- ($(pd)+(0,1.25)$);
\draw[->, thick, orange!85!black] ($(led)+(0.06,-0.1)$) -- node[lab, sloped, above, pos=0.72]{$d_c$} ($(pd)+(-0.05,0.08)$);
\draw[line width=0.3pt] ($(led)+(0,-0.75)$) arc[start angle=-90, end angle=-40.8, radius=0.75];
\node[lab] at ($(led)+(0.27,-0.94)$) {$\phi_c$};
\draw[line width=0.3pt] ($(pd)+(0,0.75)$) arc[start angle=90, end angle=139.2, radius=0.75];
\node[lab] at ($(pd)+(-0.26,0.94)$) {$\psi_c$};
\draw[->, thick, orange!85!black] ($(led)+(-0.06,-0.12)$) -- node[right, lab, pos=0.62]{$d_s$} (3.32,1.02);
\draw[->, thick, red!80!black, dashed] (3.08,1.02) -- node[left, lab, pos=0.45, align=right, text=red!60!black]{echo\\$\zeta x(t-\tau)$} ($(spad)+(0,-0.06)$);
\fill[orange!70!yellow] (0.42,2.75) circle (0.08);
\foreach \a in {0,45,...,315} \draw[orange!80!black, line width=0.4pt] ($(0.42,2.75)+(\a:0.11)$) -- ($(0.42,2.75)+(\a:0.19)$);
\node[lab, align=center, text=gray!30!black, anchor=north] at (0.42,2.5) {background\\$P_B$};
\draw[->, gray!80!black, decorate, decoration={snake, amplitude=0.6pt, segment length=4pt, post length=3pt}] (0.66,2.85) -- ($(spad)+(-0.2,-0.12)$);
\end{tikzpicture}
\caption{Considered O-ISAC system.}
\label{fig:system}
\end{figure}

We consider the system in Fig.~\ref{fig:system}, with a dual-function base station (BS) mounted on the ceiling and facing downwards, which carries a light-emitting diode (LED) transmitter and a co-located SPAD array with $N_a$ pixels. A communication user at line-of-sight range $d_c$ is equipped with a conventional PD facing upwards, and a passive target at range $d_s$ on the axis of the BS reflects the transmitted waveform back to it~\cite{nikolaidou2025oisac}. The direct LED-to-array leakage 
is assumed to be optically suppressed, with its residual absorbed in the background rate, as in the monostatic O-ISAC literature~\cite{nikolaidou2025oisac,cui2024retroisac}.

\subsection{DCO-OFDM Waveform and Communication Link}

A DCO-OFDM block consists of the $N$ intensity samples~\cite{huang2023jocn} 
\begin{equation}
x[n]=\bigl(s[n]+B\bigr)^+,
\label{eq:dcoofdm}
\end{equation}
for $n=0,\dots,N-1$, where $s[n]$ is the real-valued multicarrier signal obtained from $N$ Hermitian-symmetric subcarriers with $M$-ary quadrature amplitude modulation ($M$-QAM). The bias is $B=\beta\sigma_s$, with $\sigma_s$ being the standard deviation of $s[n]$ and $\beta$ the normalized bias\footnote{Since $s[n]$ is approximately Gaussian~\cite{huang2023jocn}, it is clipped with probability $Q(\beta)$, with $Q(\cdot)$ denoting the $Q$-function, and the clipping noise is neglected for sufficiently large $\beta$.}. The operator $(\cdot)^+=\max\{\cdot,0\}$ enforces the non-negativity of the intensity. The block is preceded by a cyclic prefix (CP) of $L_{\mathrm{cp}}$ samples, and consecutive blocks are transmitted back to back. The samples are spaced by $T_s$, and the LED emits the intensity $x(t)=(s(t)+B)^+$, with $s(t)$ being the band-limited interpolation of $\{s[n]\}$, so that $x(nT_s)=x[n]$. The average emitted optical power is $P_t=\mathbb E[x(t)]\approx B$, where $\mathbb E[\cdot]$ denotes the expectation over the data symbols. The PD user observes $y_c[n]=h_c x[n]+w_c[n]$. Here, $h_c=(m+1)A_{\mathrm{pd}}\rho_{\mathrm{pd}}\kappa_{\mathrm{opt}}\cos^m\!\phi_c\cos\psi_c/(2\pi d_c^2)$ is the Lambertian line-of-sight gain~\cite{komine2004vlc}, where $m=-\ln2/\ln\cos\Psi_{1/2}$ is the Lambertian order for half-power semi-angle $\Psi_{1/2}$, $\phi_c$ and $\psi_c$ are the irradiance and incidence angles, and $A_{\mathrm{pd}}$, $\rho_{\mathrm{pd}}$, and $\kappa_{\mathrm{opt}}$ denote the PD area, its responsivity, and the optical filter-concentrator gain, respectively. Moreover, $w_c[n]$ is the zero-mean additive white Gaussian noise of the thermal-noise-limited PD front end, with power spectral density $N_0$~\cite{cui2024retroisac} and variance $\sigma_c^2=N_0B_{\mathrm{occ}}$, where $B_{\mathrm{occ}}=(N/2-1)/(NT_s)$ is the occupied bandwidth. The electrical signal-to-noise ratio (SNR) per data subcarrier is thus $\mathrm{SNR}_c=h_c^2\sigma_s^2/\sigma_c^2$.

\subsection{Sensing Channel and Photon Rate}

The CCR and diffuse channel gains at range $d_s$ are given by~\cite{cui2024retroisac,zhang2025led}
\begin{equation}
\begin{aligned}
\zeta_{\mathrm{CCR}}&=\frac{(m+1)A_{\mathrm{spad}}k_{\mathrm{eff}}}{8\pi d_s^2}\cos^{m+1}\!\phi_s\cos\psi_s,\\
\zeta_{\mathrm{diff}}&=\frac{(m+1)A_{\mathrm{tgt}}\rho_{\mathrm{diff}}A_{\mathrm{spad}}}{2\pi^2 d_s^4}\cos^m\!\phi_s\cos\psi_t\cos\chi\cos\psi_s,
\end{aligned}
\label{eq:zeta}
\end{equation}
respectively, where $A_{\mathrm{spad}}$ is the array aperture, $k_{\mathrm{eff}}$ denotes the effective CCR reflectance, and $A_{\mathrm{tgt}}$ and $\rho_{\mathrm{diff}}$ are the area and the reflectivity of the diffuse target, respectively. In addition, $\phi_s$ and $\psi_s$ are the irradiance and incidence angles of the sensing link, and $\psi_t$ and $\chi$ denote the incidence and re-radiation angles at the diffuse target. Let $\tau=2d_s/c$ denote the round-trip delay, with speed of light $c$, and let $D=c\tau/2$ be the range to be estimated. The echo reaches the array with optical power $\zeta x(t-\tau)$, $\zeta\in\{\zeta_{\mathrm{CCR}},\zeta_{\mathrm{diff}}\}$, so that the incident photon rate is~\cite{huang2023jocn}
\begin{equation}
\lambda(t)=\lambda_b+\alpha_s\,x(t-\tau), 
\label{eq:lambda}
\end{equation}
where $\alpha_s=\zeta C_s$. Here, $C_s=\Upsilon_{\mathrm{PDE}}(1+\phi_{\mathrm{AP}}+\phi_{\mathrm{CT}})/E_{\mathrm{ph}}$ converts optical power into the photon rate, with $\Upsilon_{\mathrm{PDE}}$ being the photon detection efficiency, $\phi_{\mathrm{AP}}$ and $\phi_{\mathrm{CT}}$ the afterpulsing and crosstalk probabilities, and $E_{\mathrm{ph}}$ the photon energy at the operating wavelength. The rate $\lambda_b=(\vartheta_{\mathrm{DCR}}+\Upsilon_{\mathrm{PDE}}P_B/E_{\mathrm{ph}})(1+\phi_{\mathrm{AP}}+\phi_{\mathrm{CT}})$ collects the dark counts of the array, for total rate $\vartheta_{\mathrm{DCR}}$ and background light power $P_B$. As in~\cite{huang2023jocn}, afterpulsing and crosstalk are counted as additional independent arrivals, and the array is uniformly illuminated, so that the photon arrivals at each of its $N_a$ pixels form an independent Poisson process of rate $\lambda(t)/N_a$.

\subsection{SPAD Array and Pile-Up-Corrected Binomial Model}

After each detected photon, a SPAD pixel is quenched and remains blind to incident photons for a fixed interval, the dead time $\tau_d$~\cite{huang2023jocn}. With passive quenching, the standard low-complexity array implementation, every photon arriving during the dead time restarts it, i.e., the dead time is paralyzable, the more severe form of saturation~\cite{huang2023jocn,sarbazi2017jlt}. For a constant photon rate $\lambda$, the mean detected photon count of the array per sampling interval $T_s\ge\tau_d$ and its variance are~\cite{huang2023jocn} 
\begin{align}
\mu_a(\lambda)&=\lambda T_s e^{-\lambda\tau_d/N_a},
\label{eq:spad_mean}\\
\sigma_a^2(\lambda)&=\lambda T_s e^{-\frac{\lambda\tau_d}{N_a}}-\frac{\lambda^2T_s\tau_d}{N_a}e^{-\frac{2\lambda\tau_d}{N_a}}\Bigl(2-\frac{\tau_d}{T_s}\Bigr),
\label{eq:spad_var}
\end{align}
respectively, which we refer to as the quasi-static model. The mean peaks at $\lambda^\star=N_a/\tau_d$ and decreases beyond it, i.e., the array paralyzes, and the variance is sub-Poisson~\cite{sarbazi2017jlt}. The normalized photon rate $\epsilon=\lambda\tau_d/N_a=\lambda/\lambda^\star$ measures the saturation, with $\epsilon=1$ at the peak.

The array counts continuously, and the count of a pixel in a sampling interval depends on the photons of the preceding dead time. At the start of a block, these belong to the echo that precedes it. We therefore choose the CP longer than the round-trip delay plus one dead time, $\tau+\tau_d\le L_{\mathrm{cp}}T_s$, and use the $N$ sampling intervals that follow it, $t\in[0,NT_s)$, as the ranging observation, so that their counts depend only on the data of the current block.

Ranging calls for a short sampling interval, as the range bin is $cT_s/2$, and we set it equal to the dead time, $T_s=\tau_d$. This is the shortest interval for which a dead time reaches back only into the preceding interval, and the longest for which a pixel detects at most one photon per interval. Let $y[n]$ denote the number of pixels that detect a photon in the $n$-th interval $[nT_s,(n+1)T_s)$, $n=0,\dots,N-1$, and $\tilde q[n]$ the probability that a given pixel does so, i.e., the firing probability. Since the pixels have independent and identical arrival processes, the array count follows the binomial distribution with $N_a$ trials and success probability $\tilde q[n]$, which we write as
\begin{equation}
y[n]\sim\mathrm{Binomial}\bigl(N_a,\tilde q[n]\bigr).
\label{eq:binomial}
\end{equation}
A paralyzable pixel detects a photon arriving at $t$ only if no photon arrived in $(t-\tau_d,t)$~\cite{sarbazi2017jlt}. For Poisson arrivals of rate $\lambda(t)/N_a$, the probability of no arrival in $(t-\tau_d,t)$ is $\exp(-\frac{1}{N_a}\int_{t-\tau_d}^{t}\lambda(u)\,\mathrm{d}u)$. Since a pixel detects at most one photon per interval, its firing probability equals its mean number of detections in the interval, i.e., the pile-up integral~\cite{coates1968,rapp2021highflux}
\begin{equation}
\tilde q[n]=\frac{1}{N_a}\int_{nT_s}^{(n+1)T_s}\!\!\lambda(t)\exp\Bigl(-\tfrac{1}{N_a}\!\int_{t-\tau_d}^{t}\!\lambda(u)\,\mathrm{d}u\Bigr)\mathrm{d}t,
\label{eq:qmem}
\end{equation}
with $\lambda(t)$ given in \eqref{eq:lambda}. For a photon rate that is constant over the interval and the dead time before it, \eqref{eq:qmem} reduces to the quasi-static value $\mu_a(\lambda[n])/N_a$ of \eqref{eq:spad_mean}, where $\lambda[n]=\frac{1}{T_s}\int_{nT_s}^{(n+1)T_s}\lambda(t)\,\mathrm{d}t$ is the photon rate averaged over the $n$-th interval. With DCO-OFDM, at $T_s=\tau_d$, however, the rate varies within a dead time, and the exponent of \eqref{eq:qmem} weights the rate of the preceding interval. The two models thus differ from interval to interval, and an estimator based on \eqref{eq:spad_mean} incurs a power-dependent range bias~\cite{rapp2019deadtime,wu2025spadcrlb}.

\section{Sensing Performance and ISAC Operating Point}\label{sec:analysis}

\subsection{Range CRB}\label{sec:crb}

The delay and the reflected amplitude, $\bm\theta=[\tau,\alpha_s]^\top$, where $(\cdot)^\top$ denotes the transpose, are unknown and estimated jointly, whereas $\lambda_b$ is known, e.g., measured with the LED off. Successive counts are treated as independent, following the per-symbol observations of RF ISAC~\cite{liu2022crb,xiong2023fundamental}. This neglects the negative correlation between adjacent counts that the dead time induces.
\begin{lemma}
\label{lem:dq}
Let $\Lambda(t)=\int_{t-\tau_d}^{t}\lambda(u)\,\mathrm{d}u$ denote the mean number of photons incident on the array in $(t-\tau_d,t)$ and $r(t)=\lambda(t)e^{-\Lambda(t)/N_a}$ the detected-photon rate of the array, so that \eqref{eq:qmem} can be written as $\tilde q[n]=\frac{1}{N_a}\int_{nT_s}^{(n+1)T_s}r(t)\,\mathrm{d}t$. Then,
\begin{subequations}\label{eq:dq}
\begin{align}
\frac{\partial\tilde q[n]}{\partial\alpha_s}&=\frac{1}{N_a}\int_{nT_s}^{(n+1)T_s}\Bigl[\frac{x(t-\tau)}{\lambda(t)}\notag\\
&\qquad\qquad-\frac{1}{N_a}\int_{t-\tau_d}^{t}\!x(u-\tau)\,\mathrm{d}u\Bigr]r(t)\,\mathrm{d}t,\label{eq:dqa}\\
\frac{\partial\tilde q[n]}{\partial\tau}&=-\frac{r((n+1)T_s)-r(nT_s)}{N_a}.\label{eq:dqtau}
\end{align}
\end{subequations}
\end{lemma}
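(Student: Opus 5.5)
The plan is to differentiate the representation $\tilde q[n]=\frac{1}{N_a}\int_{nT_s}^{(n+1)T_s}r(t)\,\mathrm{d}t$ under the integral sign. The two parts need different treatment. For $\alpha_s$ a direct chain rule suffices. For $\tau$ the key is a shift-invariance argument that turns the $\tau$-derivative of $r$ into a time derivative, which then integrates to boundary terms.

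\emph{Amplitude derivative.} By \eqref{eq:lambda}, $\partial\lambda(t)/\partial\alpha_s=x(t-\tau)$. Differentiating under the inner integral gives $\partial\Lambda(t)/\partial\alpha_s=\int_{t-\tau_d}^{t}x(u-\tau)\,\mathrm{d}u$. The product and chain rules then give
\begin{equation*}
\frac{\partial r(t)}{\partial\alpha_s}=x(t-\tau)e^{-\Lambda(t)/N_a}-\frac{\lambda(t)}{N_a}e^{-\Lambda(t)/N_a}\int_{t-\tau_d}^{t}x(u-\tau)\,\mathrm{d}u .
\end{equation*}
Factoring out $r(t)=\lambda(t)e^{-\Lambda(t)/N_a}$ is legitimate because $\lambda(t)\ge\lambda_b>0$. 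Integrating over $[nT_s,(n+1)T_s)$ and dividing by $N_a$ then yields \eqref{eq:dqa}. The interchange of derivative and integral is justified because $x$ is bounded and continuous: it is the clipped band-limited interpolation. Hence the integrand and its $\alpha_s$-derivative are bounded uniformly on a neighbourhood of any $\alpha_s$, and dominated convergence applies.

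\emph{Delay derivative.} Write $r(t)=r(t;\tau)$ and set $g(t)=\lambda_b+\alpha_s x(t)$. Since $\lambda_b$ is constant, $\lambda(t)=g(t-\tau)$. The substitution $u\mapsto u+\tau$ gives $\Lambda(t)=\int_{t-\tau-\tau_d}^{t-\tau}g(v)\,\mathrm{d}v$. Thus $r(t;\tau)=r_0(t-\tau)$, with $r_0(s)=g(s)\exp\bigl(-\frac{1}{N_a}\int_{s-\tau_d}^{s}g(v)\,\mathrm{d}v\bigr)$, and it follows that $\partial r/\partial\tau=-\partial r/\partial t$. Then
\begin{equation*}
\frac{\partial\tilde q[n]}{\partial\tau}=-\frac{1}{N_a}\int_{nT_s}^{(n+1)T_s}\frac{\partial r(t)}{\partial t}\,\mathrm{d}t=-\frac{r((n+1)T_s)-r(nT_s)}{N_a}
\end{equation*}
by the fundamental theorem of calculus, which is \eqref{eq:dqtau}. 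Equivalently, one can differentiate $\frac{1}{N_a}\int_{nT_s-\tau}^{(n+1)T_s-\tau}r_0(s)\,\mathrm{d}s$ with respect to its limits (Leibniz rule), which avoids differentiating $r$ altogether.

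\emph{Main obstacle.} The delicate point is regularity. Because of the clipping $(\cdot)^+$, $x$ and hence $r_0$ are continuous but only piecewise smooth, with kinks at the clipping instants. For that reason I would use the Leibniz-rule form, which needs only continuity of $r_0$ at the endpoints; $r_0$ is continuous because $g$ is continuous and $\int g$ is $C^1$. In the alternative route, $r_0$ is absolutely continuous, which suffices for the fundamental theorem of calculus. Finally, the CP condition $\tau+\tau_d\le L_{\mathrm{cp}}T_s$ ensures that the intensity entering $\Lambda$ is the same periodic block signal for all $\tau$ in a neighbourhood. Therefore the shift identity holds with a single fixed waveform $x$.
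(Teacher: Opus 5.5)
Your proposal is correct and follows the paper's proof. For $\alpha_s$ it is the same chain-rule computation, closed with $e^{-\Lambda(t)/N_a}=r(t)/\lambda(t)$; for $\tau$ it is the same shift identity $r(t)=r_0(t-\tau)$ followed by differentiating the integration limits, and your Leibniz-rule route is exactly the paper's. Your regularity remarks on continuity of $r_0$ despite the clipping kinks are a useful addition the paper omits, but the CP condition is not needed for the shift identity, since $x(t)$ is a single fixed emitted waveform defined for all $t$.
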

\begin{IEEEproof}
By \eqref{eq:lambda}, $\Lambda(t)=\lambda_b\tau_d+\alpha_s\int_{t-\tau_d}^{t}x(u-\tau)\,\mathrm{d}u$. Differentiating $r(t)$ with respect to $\alpha_s$ gives
\begin{equation*}
\frac{\partial r(t)}{\partial\alpha_s}=\Bigl[x(t-\tau)-\frac{\lambda(t)}{N_a}\int_{t-\tau_d}^{t}\!x(u-\tau)\,\mathrm{d}u\Bigr]e^{-\Lambda(t)/N_a},
\end{equation*}
which, with $e^{-\Lambda(t)/N_a}=r(t)/\lambda(t)$, is the integrand of \eqref{eq:dqa}, and integrating over the interval yields \eqref{eq:dqa}.

For \eqref{eq:dqtau}, $\lambda(t)$ and $\Lambda(t)=\lambda_b\tau_d+\alpha_s\int_{t-\tau-\tau_d}^{t-\tau}x(v)\,\mathrm{d}v$, and hence $r(t)$, depend on $t$ and $\tau$ only through $t-\tau$, i.e., $r(t)=r_0(t-\tau)$, where $r_0(\cdot)$ is the detected-photon rate for $\tau=0$. Thus, $\tilde q[n]=\frac{1}{N_a}\int_{nT_s-\tau}^{(n+1)T_s-\tau}r_0(v)\,\mathrm{d}v$, where $\tau$ appears only in the integration limits, and differentiating with respect to $\tau$ yields \eqref{eq:dqtau}.
\end{IEEEproof}

\begin{proposition}
\label{prop:fim}
Consider a coherent frame of $L$ consecutive blocks, over which $\bm\theta$ is constant, with independent counts \eqref{eq:binomial}. The Fisher information matrix (FIM) of $\bm\theta$, averaged over the data, is $\bar{\mathbf J}=L\,\mathbb E[\mathbf J]$, where $\mathbf J$ is the FIM of one block,
\begin{equation}
[\mathbf J]_{ij}=\sum_{n=0}^{N-1}\frac{N_a}{\tilde q[n](1-\tilde q[n])}\frac{\partial\tilde q[n]}{\partial\theta_i}\frac{\partial\tilde q[n]}{\partial\theta_j},
\label{eq:fim}
\end{equation}
for $i,j\in\{\tau,\alpha\}$, where $[\cdot]_{ij}$ denotes the $(i,j)$th entry of a matrix, $\theta_\tau=\tau$, and $\theta_\alpha=\alpha_s$. Here, $\tilde q[n]$ is given by \eqref{eq:qmem} and its derivatives by Lemma~\ref{lem:dq}.
With $(\cdot)^{-1}$ denoting the matrix inverse, the range CRB is
\begin{equation}
\mathrm{CRB}(D)=\Bigl(\frac{c}{2}\Bigr)^2\bigl[\bar{\mathbf J}^{-1}\bigr]_{\tau\tau}=\Bigl(\frac{c}{2}\Bigr)^2\frac{[\bar{\mathbf J}]_{\alpha\alpha}}{[\bar{\mathbf J}]_{\tau\tau}[\bar{\mathbf J}]_{\alpha\alpha}-[\bar{\mathbf J}]_{\tau\alpha}^2}.
\label{eq:crb}
\end{equation}
\end{proposition}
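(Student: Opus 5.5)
The approach is the standard Fisher information of a binomial observation, followed by additivity over independent counts and a $2\times2$ inversion.

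\textbf{Single count.} For a single count $y[n]\sim\mathrm{Binomial}(N_a,\tilde q[n])$, the log-likelihood is
\begin{equation*}
\ell_n=y[n]\ln\tilde q[n]+(N_a-y[n])\ln(1-\tilde q[n])+\text{const},
\end{equation*}
where the constant does not depend on $\bm\theta$. Its score with respect to $\theta_i$ is
\begin{equation*}
\frac{\partial\ell_n}{\partial\theta_i}=\frac{y[n]-N_a\tilde q[n]}{\tilde q[n](1-\tilde q[n])}\,\frac{\partial\tilde q[n]}{\partial\theta_i}.
\end{equation*}
This score has zero mean. Using $\mathrm{Var}(y[n])=N_a\tilde q[n](1-\tilde q[n])$, the expectation of the product of two scores is
\begin{equation*}
\frac{N_a}{\tilde q[n](1-\tilde q[n])}\,\frac{\partial\tilde q[n]}{\partial\theta_i}\,\frac{\partial\tilde q[n]}{\partial\theta_j}.
\end{equation*}

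\textbf{One block.} The counts are assumed independent, so the log-likelihood of one block, conditioned on its data, is $\sum_n\ell_n$. The FIM is therefore the sum of the per-count contributions, which gives \eqref{eq:fim}. The derivatives $\partial\tilde q[n]/\partial\theta_i$ are taken directly from Lemma~\ref{lem:dq}. I would note the regularity conditions: $0<\tilde q[n]<1$ holds because $\lambda_b>0$ and each Poisson pixel has a positive probability of not firing. Differentiability in $\tau$ needs $r(t)$ to be continuous at the interval endpoints, which holds because $x(t)$ is a band-limited interpolation.

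\textbf{Frame of $L$ blocks.} Because of the CP condition $\tau+\tau_d\le L_{\mathrm{cp}}T_s$, the counts of block $\ell$ depend only on the data of that block. The FIM of the whole frame is therefore $\sum_{\ell=1}^{L}\mathbf J^{(\ell)}$. Averaging over i.i.d.\ data across blocks gives $L\,\mathbb E[\mathbf J]$. This is the data-averaged FIM (the expected FIM of a "modified" CRB); the averaging is a definition rather than a derived step, and I would state it as such.

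\textbf{Range CRB.} The range CRB follows from the reparametrization $D=c\tau/2$, whose Jacobian scales the $\tau\tau$ entry of the inverse by $(c/2)^2$. The explicit form in \eqref{eq:crb} is the $(1,1)$ entry of the inverse of a symmetric $2\times2$ matrix, obtained by cofactor inversion. This requires $\det\bar{\mathbf J}>0$, i.e., that the vectors $\{\partial_\tau\tilde q[n]\}_n$ and $\{\partial_{\alpha}\tilde q[n]\}_n$ are not collinear, which holds for non-constant $x(t)$.

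\textbf{Main obstacle.} The only delicate point is justifying the regularity conditions (interchanging differentiation and expectation, and nonsingularity of $\bar{\mathbf J}$). This is not an algebraic difficulty. I would handle it by citing the smoothness of $\tilde q[n]$ established via Lemma~\ref{lem:dq} and the finite support of the binomial distribution.
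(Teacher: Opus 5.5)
Your proposal is correct and follows the paper's argument. The binomial Fisher information $N_a/(\tilde q[n](1-\tilde q[n]))$ about $\tilde q[n]$ is pushed through the chain rule with the derivatives of Lemma~\ref{lem:dq}, the contributions of the independent counts and of the $L$ blocks add and are averaged over the data, and the reparametrization $D=c\tau/2$ with $2\times2$ cofactor inversion gives \eqref{eq:crb}; you only write out the score calculation and regularity conditions that the paper delegates to Kay, though your claim that $\bar{\mathbf J}$ is nonsingular for any non-constant $x(t)$ is asserted rather than proved (the paper does not address it either) and is better stated as an assumption.
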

\begin{IEEEproof}
A $\mathrm{Binomial}(N_a,\tilde q)$ count carries the Fisher information $N_a/(\tilde q(1-\tilde q))$ about $\tilde q$~\cite{kay1993}, which translates into \eqref{eq:fim} through the derivatives of $\tilde q[n]$. For independent counts, the contributions of the $N$ samples and of the $L$ blocks add, and their average over the data symbols is $L\,\mathbb E[\mathbf J]$. Combining the above and $D=c\tau/2$ leads to \eqref{eq:crb}.
\end{IEEEproof}

Since the dead time correlates adjacent counts, \eqref{eq:crb} is the CRB of the independent-count model rather than a strict bound for the actual counts. Moreover, as $\bar{\mathbf J}$ averages the FIM over the data before inversion, \eqref{eq:crb} lower-bounds the CRB averaged over the data~\cite{xiong2023fundamental}.

In the quasi-static special case, $\tilde q[n]=\mu_a(\lambda[n])/N_a$ and, at $T_s=\tau_d$, $N_a\tilde q[n](1-\tilde q[n])=\sigma_a^2(\lambda[n])$. Hence, \eqref{eq:fim} becomes $[\mathbf J]_{ij}=\sum_{n}\Xi(\lambda[n])\,\partial\lambda[n]/\partial\theta_i\,\partial\lambda[n]/\partial\theta_j$ with $\Xi=\mu_a'^2/\sigma_a^2$ and $\mu_a'=\mathrm d\mu_a/\mathrm d\lambda=T_se^{-\epsilon}(1-\epsilon)$. Since $\lambda[n]$ averages $\lambda(t)=\lambda_b+\alpha_sx(t-\tau)$ over the $n$-th interval, $\partial\lambda[n]/\partial\tau=-[\lambda((n+1)T_s)-\lambda(nT_s)]/T_s$, as for \eqref{eq:dqtau}. In both models, the delay information thus enters through the change of a rate across the edges of the interval. In the quasi-static model, this is the incident rate $\lambda(t)$, scaled by the slope $\mu_a'$ of the mean response, and in \eqref{eq:fim}, it is the detected-photon rate $r(t)$, cf.~\eqref{eq:dqtau}.

Since $\mu_a'$ vanishes at $\lambda^\star$ and $\Xi$ decays as $\epsilon e^{-\epsilon}$ beyond, a saturated sample carries vanishing delay information. For the PD benchmark of the same bare aperture, which observes $\rho_{\mathrm{pd}}\zeta x(nT_s-\tau)$ in thermal noise of variance $\sigma_c^2$, this Gaussian observation yields the classical linear delay bound~\cite{kay1993}, which is monotone in $P_t$.

\subsection{Rate-Constrained Operating Point}\label{sec:opt}

Communication and sensing share the average optical power $P_t\approx B$. Since $\sigma_s=P_t/\beta$, the electrical SNR is quadratic in the optical power, $\mathrm{SNR}_c=\gamma_cP_t^2$ with $\gamma_c=h_c^2/(\beta^2\sigma_c^2)$. On each data subcarrier, adaptive $M$-QAM at a target bit error rate $\mathrm{BER}_{\mathrm t}$ operates at the SNR gap $\Gamma=-\ln(5\,\mathrm{BER}_{\mathrm t})/1.5$ from capacity~\cite{goldsmith1997mqam}, so that the achievable rate is
\begin{equation}
R_c(P_t)=\frac{N}{N+L_{\mathrm{cp}}}B_{\mathrm{occ}}\log_2\Bigl(1+\frac{\gamma_cP_t^2}{\Gamma}\Bigr).
\label{eq:rate}
\end{equation}
The rate is strictly increasing in $P_t$, so that a target rate $R_0$ is equivalent to the power floor $P_t^{\mathrm c}(R_0)=\sqrt{\Gamma(2^{R_0(N+L_{\mathrm{cp}})/(NB_{\mathrm{occ}})}-1)/\gamma_c}$. Since the ranging precision is not monotone in $P_t$, the operating point is the power that minimizes $\mathrm{CRB}(D)$ over $0<P_t\le P_{\max}$, with LED power limit $P_{\max}$, subject to $R_c(P_t)\ge R_0$, which is feasible for $R_0\le R_c(P_{\max})$. Let $P_t^{\mathrm s}$ denote the minimizer without the rate constraint, the sensing-optimal power. If the CRB does not decrease on $[P_t^{\mathrm s},P_{\max}]$, the operating point over the feasible powers $[P_t^{\mathrm c}(R_0),P_{\max}]$ is
\begin{equation}
P_t^\star=\max\bigl\{P_t^{\mathrm s},\,P_t^{\mathrm c}(R_0)\bigr\},
\label{eq:pstar}
\end{equation}
i.e., the system is sensing-limited when $R_0\le R_c(P_t^{\mathrm s})$ and rate-limited otherwise. For a linear receiver or a non-saturating target, the sensing quality is monotone in the power, so that \eqref{eq:pstar} collapses to the maximum power.

\section{Numerical Results}\label{sec:results}

\begin{table}[t]
\centering
\caption{Simulation Parameters}\label{tab:params}
\scriptsize
\setlength{\tabcolsep}{3pt}
\renewcommand{\arraystretch}{0.85}
\begin{tabular}{llll}
\toprule
$\tau_d=T_s$ & 10 ns & $N_a$ & $\{256,1024,4096,8192\}$ \\
$\Upsilon_{\mathrm{PDE}}$ & 0.35 & $\vartheta_{\mathrm{DCR}}$, $P_B$ & 0.5 MHz, 10 nW \\
$\phi_{\mathrm{AP}}$, $\phi_{\mathrm{CT}}$ & 0.75\%, 2.5\% & $A_{\mathrm{spad}}=A_{\mathrm{pd}}$ & 1 cm$^2$ \\
$k_{\mathrm{eff}}$ & 0.92 & $\rho_{\mathrm{diff}}$, $A_{\mathrm{tgt}}$ & 0.7, 100 cm$^2$ \\
$N$, $L_{\mathrm{cp}}$, $L$ & 64, 8, 30 & $\beta$, $\Psi_{1/2}$ ($m$) & 3, $60^\circ$ (1) \\
$d_c$, $d_s$ & 3 m, 2 m & $N_0$, $\rho_{\mathrm{pd}}$ & $10^{-22}$ A$^2$/Hz, 0.25 A/W \\
Wavelength & 450 nm & $\kappa_{\mathrm{opt}}$, $\phi_c=\psi_c$ & 2.5, $48^\circ$ \\
$\mathrm{BER}_{\mathrm t}$ & $10^{-3}$ & $P_{\max}$ & 2 W \\
$\Gamma$ at $\mathrm{BER}_{\mathrm t}$ & 5.5 dB & $\lambda_b$ & $8.2\times10^{9}$ s$^{-1}$ \\
\bottomrule
\end{tabular}
\end{table}

The simulation parameters are listed in Table~\ref{tab:params} and follow~\cite{huang2023jocn,cui2024retroisac,zhang2025led,nikolaidou2025oisac}. The user's PD is $2$~m below the BS and $2.24$~m off its axis, i.e., $\phi_c=\psi_c=48^\circ$, and for $\beta=3$, the clipping probability $Q(\beta)\approx1.3\times10^{-3}$ is negligible. Figs.~\ref{fig:count}(b) and~\ref{fig:crb} are reported versus the received optical power $P_{\mathrm{rx}}=\zeta P_t$, or equivalently the mean normalized photon rate $\epsilon=(\lambda_b+C_sP_{\mathrm{rx}})\tau_d/N_a$. All markers are obtained by Monte Carlo (MC) simulation. For the SPAD, the photon arrivals at each pixel, including those during the CP of each block, are drawn from a Poisson process of rate $\lambda(t)/N_a$, a pixel detects a photon only if no photon arrived in the preceding $\tau_d$ seconds, and $y[n]$ is the number of pixels that detect a photon in the $n$-th interval. Each marker in Figs.~\ref{fig:count}(b)--\ref{fig:tradeoff} is the root-mean-square error (RMSE) of the range estimate, averaged over $200$ frames with new data. For each frame, the ML estimator maximizes the log-likelihood of the binomial counts \eqref{eq:binomial} of its $L$ blocks,
\begin{equation}
\hat{\bm\theta}=\arg\max_{\bm\theta}\sum_{l,n}\bigl[y_l[n]\ln\tilde q_l[n]+(N_a-y_l[n])\ln(1-\tilde q_l[n])\bigr],
\label{eq:ml}
\end{equation}
where $y_l[n]$ is the count in the $n$-th interval of the $l$-th block and $\tilde q_l[n]$ is its firing probability for the parameters $\bm\theta$, given by \eqref{eq:qmem} or, for the estimator based on the quasi-static model, by $\mu_a(\lambda_l[n])/N_a$ from \eqref{eq:spad_mean}. For the PD, the ML estimator uses its samples $\rho_{\mathrm{pd}}\zeta x(nT_s-\tau)$ in Gaussian noise, and its RMSE is averaged over $300$ frames. The analytical curves in Figs.~\ref{fig:count}(b)--\ref{fig:tradeoff} are averaged over $10^4$ blocks of $16$-QAM data.

\begin{figure}[t]
\centering
\begin{tikzpicture}
\begin{axis}[
    name=prof,
    height=0.40\linewidth, width=0.95\linewidth,
    xlabel={(a) Sample index $n$},
    ylabel={Mean count},
    xmin=20, xmax=44, ymin=0, ymax=1150,
    ytick={0,200,400,600,800},
    xlabel style={font=\small}, ylabel style={font=\small},
    xticklabel style={font=\scriptsize}, yticklabel style={font=\scriptsize},
    legend style={font=\scriptsize, at={(0.5,0.98)}, anchor=north, inner sep=1pt},
    legend columns=3, legend style={/tikz/every even column/.append style={column sep=6pt}},
    grid=major,
]
\addplot[red, thick, dashed, no markers] table[x=n, y=c_steady] {new_fig_data/count_profile.dat};
\addlegendentry{Quasi-static \eqref{eq:spad_mean}}
\addplot[blue, thick, no markers] table[x=n, y=c_complete] {new_fig_data/count_profile.dat};
\addlegendentry{Pile-up \eqref{eq:qmem}}
\addplot[black, only marks, mark=*, mark size=1.2pt] table[x=n, y=c_mc] {new_fig_data/count_profile.dat};
\addlegendentry{MC}
\end{axis}
\begin{loglogaxis}[
    at={(prof.below south west)}, anchor=above north west, yshift=-2pt,
    height=0.42\linewidth, width=0.95\linewidth,
    xlabel={(b) Received optical power $P_{\mathrm{rx}}$ [W]},
    ylabel={Range error [mm]},
    xmin=7e-9, xmax=7e-6, ymin=1, ymax=2e5,
    ytick={1,10,100,1000,10000},
    xlabel style={font=\small}, ylabel style={font=\small},
    xticklabel style={font=\scriptsize}, yticklabel style={font=\scriptsize},
    legend style={font=\scriptsize, at={(0.52,0.98)}, anchor=north, inner sep=1pt, row sep=-1pt},
    legend columns=3, legend cell align=left, legend style={/tikz/every even column/.append style={column sep=6pt}},
    grid=major, unbounded coords=jump,
]
\addplot[black, thick, no markers, restrict expr to domain={\thisrow{eps}}{0.1:41}] table[x=Prx, y=crb_mm] {new_fig_data/fig3_agg_Na1024.dat};
\addlegendentry{$\sqrt{\mathrm{CRB}}$ \eqref{eq:crb}}
\addplot[blue, only marks, mark=*, mark size=1.4pt] table[x=Prx, y=rmse_pu_mm] {new_fig_data/rangewalk_rmse_mc.dat};
\addlegendentry{ML on pile-up \eqref{eq:qmem}}
\addplot[red, only marks, mark=square*, mark size=1.4pt] table[x=Prx, y=rmse_qs_mm] {new_fig_data/rangewalk_rmse_mc.dat};
\addlegendentry{ML on quasi-static \eqref{eq:spad_mean}}
\end{loglogaxis}
\end{tikzpicture}
\caption{(a) Mean array count per sample and (b) range RMSE versus received optical power for $N_a=1024$.}
\label{fig:count}
\end{figure}

\begin{figure}[t]
\centering
\begin{tikzpicture}
\begin{loglogaxis}[
    height=0.50\linewidth, width=0.95\linewidth,
    xlabel={Received optical power $P_{\mathrm{rx}}$ [W]},
    ylabel={$\sqrt{\mathrm{CRB}(D)}$ [mm]},
    xmin=3e-9, xmax=1e-5, ymin=0.5, ymax=100,
    xlabel style={font=\small}, ylabel style={font=\small},
    xticklabel style={font=\scriptsize}, yticklabel style={font=\scriptsize},
    legend style={font=\scriptsize, at={(0.5,1.03)}, anchor=north, inner sep=1pt, row sep=-1pt},
    legend columns=3, legend cell align=left, legend style={/tikz/every even column/.append style={column sep=4pt}},
    grid=major,
]
\addplot[red, thick, no markers, restrict expr to domain={\thisrow{eps}}{0:41}] table[x=Prx, y=crb_mm] {new_fig_data/fig3_agg_Na256.dat};
\addlegendentry{$N_a=256$}
\addplot[blue, thick, no markers, restrict expr to domain={\thisrow{eps}}{0:41}] table[x=Prx, y=crb_mm] {new_fig_data/fig3_agg_Na1024.dat};
\addlegendentry{$N_a=1024$}
\addplot[black!60!green, thick, no markers] table[x=Prx, y=crb_mm] {new_fig_data/fig3_agg_Na4096.dat};
\addlegendentry{$N_a=4096$}
\addplot[purple, thick, no markers] table[x=Prx, y=crb_mm] {new_fig_data/fig3_agg_Na8192.dat};
\addlegendentry{$N_a=8192$}
\addplot[blue, thick, dashed, no markers, restrict expr to domain={\thisrow{eps}}{0:41}] table[x=Prx, y=crb_steady_mm] {new_fig_data/fig3_agg_Na1024.dat};
\addlegendentry{Quasi-static bound}
\addplot[red, only marks, mark=*, mark size=1.3pt, forget plot] table[x=Prx, y expr=\thisrow{rmse_R}*1000] {new_fig_data/fig3_markers_Na256.dat};
\addplot[blue, only marks, mark=*, mark size=1.3pt, forget plot] table[x=Prx, y expr=\thisrow{rmse_R}*1000] {new_fig_data/fig3_markers_fine.dat};
\addplot[black!60!green, only marks, mark=*, mark size=1.3pt, forget plot] table[x=Prx, y expr=\thisrow{rmse_R}*1000] {new_fig_data/fig3_markers_Na4096.dat};
\addplot[purple, only marks, mark=*, mark size=1.3pt, forget plot] table[x=Prx, y expr=\thisrow{rmse_R}*1000] {new_fig_data/fig3_markers_Na8192.dat};
\addlegendimage{black, only marks, mark=*}
\addlegendentry{ML on \eqref{eq:qmem}}
\end{loglogaxis}
\end{tikzpicture}
\caption{Range CRB versus received optical power for different array sizes.}
\label{fig:crb}
\end{figure}

Fig.~\ref{fig:count} compares the two count models in \eqref{eq:spad_mean} and \eqref{eq:qmem} at $T_s=\tau_d$ and shows their impact on ranging. Fig.~\ref{fig:count}(a) shows the mean count of samples $n=20$ to $44$ of one data block at $P_{\mathrm{rx}}=0.41~\mu$W, i.e., in saturation ($\epsilon\approx3.4$). For this block, the pile-up model gives $N_a\tilde q[n]$ from \eqref{eq:qmem}, the quasi-static model gives $\mu_a(\lambda[n])$ from \eqref{eq:spad_mean}, and the MC markers average $y[n]$ over $4000$ photon realizations. The pile-up model follows the MC, whereas the quasi-static model misses the sample-to-sample variations. The reason is that, at $T_s=\tau_d$, the count in an interval depends on the photons of the preceding interval, whereas the quasi-static model assumes the same rate in both. After a dark interval, few pixels are blind, so that a bright interval yields more counts than \eqref{eq:spad_mean} predicts. For instance, after the intensity dip at $n=27$, $723$ pixels fire at $n=28$, whereas \eqref{eq:spad_mean} never exceeds its maximum $N_a/e\approx377$.

Fig.~\ref{fig:count}(b) shows how this mismatch affects ranging, from the unsaturated regime ($\epsilon\approx0.1$) to deep saturation ($\epsilon\approx40$). From the same counts, $\bm\theta$ is estimated by the ML estimator \eqref{eq:ml} based on \eqref{eq:qmem} and by the one based on \eqref{eq:spad_mean}, and the markers give the resulting range RMSE. The estimator based on \eqref{eq:qmem} follows $\sqrt{\mathrm{CRB}}$ from \eqref{eq:crb} except at the highest powers. The RMSE of the estimator based on \eqref{eq:spad_mean}, in contrast, is $0.08$ to $1.2$~m, up to $400$ times larger, and does not follow the trend of the bound, i.e., it grows with the received power in the unsaturated regime, where the bound decreases. The bias of this estimator, i.e., the mean of its range error, is more than $98\%$ of this RMSE, which reveals the power-dependent range bias predicted in Section~\ref{sec:system}.

Fig.~\ref{fig:crb} shows $\sqrt{\mathrm{CRB}}$ from \eqref{eq:crb} for four array sizes, the quasi-static bound, i.e., \eqref{eq:crb} evaluated with the quasi-static model \eqref{eq:spad_mean}, $\tilde q[n]=\mu_a(\lambda[n])/N_a$, for $N_a=1024$, and the RMSE of the ML estimator \eqref{eq:ml} based on \eqref{eq:qmem}. The estimator is unbiased, i.e., its mean range error is zero within about two standard errors of the MC average, and up to $\epsilon\approx10$ its RMSE exceeds $\sqrt{\mathrm{CRB}}$ by $9\%$ on average. The minimum of the bound improves roughly as $N_a^{-1/2}$, from $5.0$ to $0.8$~mm for $N_a=256$ to $8192$, and lies at $P_{\mathrm{rx}}\approx0.11$, $0.45$, $1.6$, and $3.6~\mu$W, respectively, i.e., near $\epsilon\approx3.5$ for all array sizes, beyond which the bound increases. Deep in saturation, the bound \eqref{eq:crb} becomes loose. The quasi-static bound is up to $2.6$ times larger at its blind spot $\epsilon\approx1$, where $\mu_a'=0$, and within about $25\%$ of \eqref{eq:crb} from the optimum onward. This bound holds only if the counts follow \eqref{eq:spad_mean}; for the actual counts, the estimator based on \eqref{eq:spad_mean} is biased, as discussed for Fig.~\ref{fig:count}(b).

\begin{figure}[t]
\centering
\begin{tikzpicture}
\begin{semilogyaxis}[
    height=0.5\linewidth, width=0.95\linewidth,
    xlabel={Achievable rate $R_c$ [Mbps]},
    ylabel={$\sqrt{\mathrm{CRB}(D)}$ [mm]},
    xmin=0, xmax=220, ymin=1, ymax=300,
    xlabel style={font=\small}, ylabel style={font=\small},
    xticklabel style={font=\scriptsize}, yticklabel style={font=\scriptsize},
    legend style={font=\scriptsize, at={(0.5,1.03)}, anchor=north, inner sep=1pt, row sep=-1pt},
    legend columns=2, legend cell align=left, legend style={/tikz/every even column/.append style={column sep=6pt}},
    grid=major,
]
\addplot[blue, thick, no markers] table[x=Rc_Mbps, y=crb_ccr_mm] {new_fig_data/tradeoff_complete_B30.dat};
\addlegendentry{SPAD, CCR}
\addplot[red, thick, dashed, no markers] table[x=Rc_Mbps, y=crb_diff_mm] {new_fig_data/tradeoff_complete_B30.dat};
\addlegendentry{SPAD, diffuse}
\addplot[black, thick, dotted, no markers] table[x=Rc_Mbps, y=crb_pd_mm] {new_fig_data/tradeoff_complete_B30.dat};
\addlegendentry{PD, CCR}
\addplot[blue, only marks, mark=*, mark size=1.4pt, forget plot] table[x=Rc_Mbps, y=rmse_mm] {new_fig_data/tradeoff_mc_ccr.dat};
\addplot[red, only marks, mark=*, mark size=1.4pt, forget plot] table[x=Rc_Mbps, y=rmse_mm] {new_fig_data/tradeoff_mc_diff.dat};
\addplot[black, only marks, mark=*, mark size=1.4pt, forget plot] table[x=Rc_Mbps, y=rmse_mm] {new_fig_data/tradeoff_mc_pd_L30.dat};
\addlegendimage{black, only marks, mark=*}
\addlegendentry{ML (MC)}
\end{semilogyaxis}
\end{tikzpicture}
\caption{Range CRB versus achievable rate for $N_a=1024$.}
\label{fig:tradeoff}
\end{figure}

Fig.~\ref{fig:tradeoff} shows the resulting root CRB-rate region~\cite{xiong2023fundamental}, traced by sweeping $P_t$ up to $P_{\max}$, together with the RMSE of the corresponding ML estimators. The rate grows monotonically with the power, whereas the CCR range CRB attains its minimum at $P_t^{\mathrm s}\approx0.22$~W, i.e., at $R_c\approx17$~Mbps and $\sqrt{\mathrm{CRB}}\approx2.2$~mm. Beyond this point, a higher power increases the rate at the cost of ranging accuracy, and at $P_{\max}$, where $R_c\approx203$~Mbps, the bound is about $3$ times larger than at $P_t^{\mathrm s}$. The CRB thus does not decrease beyond $P_t^{\mathrm s}$, as \eqref{eq:pstar} requires. The diffuse target remains unsaturated, so that its CRB decreases monotonically and $P_t^\star=P_{\max}$. The PD bound is monotone and is attained within $10\%$. The SPAD is about $17$ times more precise at $P_t^{\mathrm s}$, and the two bounds cross only near $160$~Mbps, beyond which the PD is more precise. Hence, the trade-off is a receiver-physics effect.

\section{Conclusion}\label{sec:concl}

We analyzed ranging with a SPAD array in DCO-OFDM O-ISAC. Sampling the array once per dead time, we modeled the array count in each sampling interval as a binomial random variable that accounts for the intensity variation within a dead time, derived the range CRB, and validated it by MC simulation of the photon counting. At this sampling rate, the SPAD count model of OWC, which assumes a constant photon rate over a dead time, causes a power-dependent range bias. Moreover, saturation makes the ranging accuracy of the SPAD array best at a sensing-optimal transmit power, whereas a linear receiver benefits from more power. Under a rate constraint, the optimal transmit power is then the larger of this power and the minimum power required by the target rate.

\bibliographystyle{IEEEtran}
\bibliography{refs}

@article{wen2023fsoisac,
  author  = {Y. Wen and F. Yang and J. Song and Z. Han},
  title   = {{Free-space optical integrated sensing and communication based on DCO-OFDM: Performance metrics and resource allocation}},
  journal = {IEEE Internet Things J.},
  volume  = {12}, number = {2}, pages = {2158--2173}, month = jan, year = {2025}
}

@inproceedings{nikolaidou2025oisac,
  author = {D. Nikolaidou and D. Bozanis and P. Sarigiannidis and G. K. Karagiannidis and R. Schober and C. Masouros and V. K. Papanikolaou},
  title  = {{On the performance analysis of optical wireless integrated sensing and communications}},
  booktitle = {Proc. IEEE Int. Conf. Commun. Workshops (ICC Wkshps)},
  address = {Glasgow, U.K.},
  pages = {1--6}, month = may, year = {2026}
}

@article{cui2024retroisac,
  author  = {Y. Cui and C. Chen and Y. Cai and Z. Zeng and M. Liu and J. Ye and S. Shao and H. Haas},
  title   = {{Retroreflective optical ISAC using OFDM: Channel modeling and performance analysis}},
  journal = {Opt. Lett.},
  volume  = {49}, number = {15}, pages = {4214--4217}, month = aug, year = {2024}
}

@article{zhang2025led,
  author  = {R. Zhang and Y. Shao and M. Li and L. Lu and Y. C. Eldar},
  title   = {{Optical integrated sensing and communication with light-emitting diode}},
  journal = {IEEE Internet Things J.},
  volume  = {12}, number = {9}, pages = {12896--12911}, month = may, year = {2025}
}

@article{huang2023jocn,
  author  = {S. Huang and Y. Li and C. Chen and M. {Dehghani Soltani} and R. K. Henderson and M. Safari and H. Haas},
  title   = {{Performance analysis of SPAD-based optical wireless communication with OFDM}},
  journal = {J. Opt. Commun. Netw.},
  volume  = {15}, number = {3}, pages = {174--186}, month = mar, year = {2023}
}

@article{sarbazi2017jlt,
  author  = {E. Sarbazi and M. Safari and H. Haas},
  title   = {{Statistical modeling of single-photon avalanche diode receivers for optical wireless communications}},
  journal = {IEEE Trans. Commun.},
  volume  = {66}, number = {9}, pages = {4043--4058}, month = sep, year = {2018}
}

@article{komine2004vlc,
  author  = {T. Komine and M. Nakagawa},
  title   = {{Fundamental analysis for visible-light communication system using LED lights}},
  journal = {IEEE Trans. Consum. Electron.},
  volume  = {50}, number = {1}, pages = {100--107}, month = feb, year = {2004}
}

@book{kay1993,
  author    = {S. M. Kay},
  title     = {{Fundamentals of Statistical Signal Processing: Estimation Theory}},
  publisher = {Prentice-Hall}, address = {Englewood Cliffs, NJ, USA}, year = {1993}
}

@article{liu2022crb,
  author  = {F. Liu and Y.-F. Liu and A. Li and C. Masouros and Y. C. Eldar},
  title   = {{Cram\'er--Rao bound optimization for joint radar-communication beamforming}},
  journal = {IEEE Trans. Signal Process.},
  volume  = {70}, pages = {240--253}, year = {2022}
}

@article{liu2018dfrc,
  author  = {F. Liu and L. Zhou and C. Masouros and A. Li and W. Luo and A. Petropulu},
  title   = {{Toward dual-functional radar-communication systems: Optimal waveform design}},
  journal = {IEEE Trans. Signal Process.},
  volume  = {66}, number = {16}, pages = {4264--4279}, month = aug, year = {2018}
}

@article{xiong2023fundamental,
  author  = {Y. Xiong and F. Liu and Y. Cui and W. Yuan and T. X. Han and G. Caire},
  title   = {{On the fundamental tradeoff of integrated sensing and communications under Gaussian channels}},
  journal = {IEEE Trans. Inf. Theory},
  volume  = {69}, number = {9}, pages = {5723--5751}, month = sep, year = {2023}
}

@article{xu2025optisac,
  author  = {C. Xu and C. Masouros and S. Sugiura and P. Petropoulos and R. G. Maunder and L.-L. Yang and H. Haas and L. Hanzo},
  title   = {{Integrated positioning and communication relying on wireless optical OFDM}},
  journal = {IEEE J. Sel. Areas Commun.},
  volume  = {43}, number = {5}, pages = {1721--1737}, month = may, year = {2025}
}

@article{coates1968,
  author  = {P. B. Coates},
  title   = {{The correction for photon `pile-up' in the measurement of radiative lifetimes}},
  journal = {J. Phys. E: Sci. Instrum.},
  volume  = {1}, number = {8}, pages = {878--879}, month = aug, year = {1968}
}

@article{rapp2021highflux,
  author  = {J. Rapp and Y. Ma and R. M. A. Dawson and V. K. Goyal},
  title   = {{High-flux single-photon lidar}},
  journal = {Optica},
  volume  = {8}, number = {1}, pages = {30--39}, month = jan, year = {2021}
}

@article{goldsmith1997mqam,
  author  = {A. J. Goldsmith and S.-G. Chua},
  title   = {{Variable-rate variable-power MQAM for fading channels}},
  journal = {IEEE Trans. Commun.},
  volume  = {45}, number = {10}, pages = {1218--1230}, month = oct, year = {1997}
}

@article{rapp2019deadtime,
  author  = {J. Rapp and Y. Ma and R. M. A. Dawson and V. K. Goyal},
  title   = {Dead time compensation for high-flux ranging},
  journal = {IEEE Trans. Signal Process.},
  volume  = {67},
  number  = {13},
  pages   = {3471--3486},
  month   = jul,
  year    = {2019}
}

@inproceedings{gupta2019flooded,
  author    = {A. Gupta and A. Ingle and A. Velten and M. Gupta},
  title     = {Photon-flooded single-photon {3D} cameras},
  booktitle = {Proc. IEEE/CVF Conf. Comput. Vis. Pattern Recognit. (CVPR)},
  address   = {Long Beach, CA, USA},
  pages     = {6763--6772},
  month     = jun,
  year      = {2019}
}

@article{wu2025spadcrlb,
  author  = {H. Wu and Y. Wang and S. Sun and L. Zhao and L. Tong and L. Shen and J. Zhu},
  title   = {{Performance bounds of ranging precision in SPAD-based dToF LiDAR}},
  journal = {Sensors},
  volume  = {25}, number = {19}, month = oct, year = {2025}
}

\end{document}